\newtheorem{Th}{Theorem}[section]
\newtheorem{proposition}[Th]{Proposition}
\newtheorem{definition}[Th]{Definition}
\newtheorem{example}[Th]{Example}
\newenvironment{proof}{\textbf{Proof.}}{$\;\;\; \Box$}
\newcommand{\ie}{\textit{i.e.}~}
\newcommand{\Nat}{\mathbb{N}}
\newcommand{\BB}{\mathbb{B}}
\newcommand{\CC}{\mathcal{C}}
\newcommand{\HF}{\mathbb{H}}
\newcommand{\Fin}{\mathbb{F}}
\newcommand{\power}[1]{\mathcal{P}(#1 )}
\newcommand{\powerf}[1]{\mathcal{P}_{f}(#1 )}
\newcommand{\Powerf}{\mathcal{P}_{f}}
\newcommand{\powerfk}[1]{\mathcal{P}_{f}^{k}(#1 )}
\newcommand{\powerH}[1]{\mathcal{P}_{H}(#1 )}
\newcommand{\powerh}{\mathcal{P}_{H}}
\newcommand{\powerV}[1]{\mathcal{P}_{V}(#1 )}
\newcommand{\powerv}{\mathcal{P}_{V}}
\newcommand{\powerP}[1]{\mathcal{P}_{P}(#1 )}
\newcommand{\powerPO}[1]{\mathcal{P}_{P}^{0}(#1 )}
\newcommand{\vn}{\varnothing}
\newcommand{\Init}{\mathbf{0}}
\newcommand{\Term}{\mathbf{1}}
\newcommand{\Two}{\mathbf{2}}
\newcommand{\Max}[1]{\mathsf{Max}(#1 )}
\newcommand{\MT}[1]{\mathsf{MT}(#1 )}
\newcommand{\MA}[1]{\mathsf{MA}(#1 )}
\newcommand{\MAk}[1]{\mathsf{MA}^{k}(#1 )}
\newcommand{\Form}[1]{\mathcal{F}(#1 )}
\newcommand{\lift}[1]{#1_{\bot}}
\newcommand{\CUM}{\mathbf{CUMet}}
\newcommand{\Stone}{\mathbf{Stone}}
\newcommand{\Dom}{\mathbf{Dom}}
\newcommand{\Set}{\mathbf{Set}}
\newcommand{\Bool}{\mathbf{Bool}}
\newcommand{\SFP}{\mathbf{SFP}}
\newcommand{\KK}{\mathbf{K}}
\newcommand{\lsem}{\llbracket}
\newcommand{\rsem}{\rrbracket}
\title{A Cook's Tour of the Finitary Non-Well-Founded Sets}
\author{Samson Abramsky\\
Oxford University Computing Laboratory}
\date{}
\begin{document}
\maketitle

\begin{abstract}
We give multiple descriptions of a topological universe of finitary sets, which can be seen as a natural limit completion of the hereditarily finite sets. This universe is characterized as a metric completion of the hereditarily finite sets; as a Stone space arising as the solution of a functorial fixed-point equation involving the Vietoris construction; as the Stone dual of the free modal algebra; and as the subspace of maximal elements of a domain equation involving the Plotkin (or convex) powerdomain. These results illustrate the methods developed in the author's `Domain theory in logical form', and related literature, and have been taken up in recent work on topological coalgebras. 
The set-theoretic universe of finitary sets also supports an interesting form of set theory. It contains non-well founded sets and a universal set; and is closed under positive versions of the usual axioms of set theory.
\end{abstract}

\section{Some reminiscences, and an explanation}
It is a great pleasure to contribute this paper to a birthday volume for Dov.
Dov and I arrived at imperial College at around the same time, and soon he, Tom Maibaum and I were embarked on a joint project, the Handbook of Logic in Computer Science. We obtained a generous advance from Oxford University Press, and a grant from the Alvey Programme, which allowed us to develop the Handbook in a rather unique, interactive way. We held regular meetings at Cosener's House in Abingdon (a facility run by what was then the U.K. Science and Engineering Research Council), at which contributors would present their ideas and draft material for their chapters for discussion and criticism. Ideas for new chapters and the balance of the volumes were also discussed. Those were a remarkable series of meetings --- a veritable education in themselves. I must confess that during this long process, I did occasionally wonder if it would ever terminate \dots . But the record shows that five handsome volumes were produced \cite{Han}.
Moreover, I believe that the Handbook has proved to be a really valuable resource for students and researchers. It has been used as the basis for a number of summer schools. Many of the chapters have become standard references for their topics. In a field with rapidly changing fashions, most of the material has stood the test of time --- thus far at least!

A large part of this success is due to Dov. Even though this particular Handbook series (among the many he has edited) is not the closest to his own interests, he not only originally inspired the project and got it going, but he stayed with it, and his energy and enthusiasm were essential to carrying it through. The ideas he had learned from his previous experience with the Handbook of Philosophical Logic \cite{HPL} proved important. For example, every Chapter had an official Second Reader, a friendly critic and conscience; in many cases, these Second Readers worked above and beyond the call of duty, and helped to materially improve the Chapters. Another of Dov's ideas was that each Chapter should have a broad division into three parts. A first part should be genuinely introductory, and provide a helpful overview to the browser --- who might then return for a more detailed look. The second part should be the technical core of the topic --- and the contents of this core should be agreed by a consensus, in the discussions at the Handbook meetings. Finally, in the last part the author was free to ride their own hobby-horses, and pursue those topics they were particularly keen on in greater depth. The wisdom behind this is that the freedom offered by this third part made accepting a wider consensus on the core of the Chapter much more palatable to authors.

Beyond these organizational ideas, Dov's presence at the Handbook meetings was crucial to establishing their distinctive, intellectually engaged but friendly and relaxed atmosphere. He infused these occasions with his inimitable sense of humour, and his vision of the great possibilities of cooperation in Science. 

So, thank you Dov, for this and much else --- but above all for drawing me into what proved to be such a worthwhile project.

\subsection*{An Explanation}
The scientific paper which follows requires a few words of explanation.
Recalling those days when we were engaged with the Handbook, roughly the period 1985--90, the thought occurred to me that writing up a lecture which I had given then, but never published, might be rather appropriate. For one thing, the lecture has something to say about modal logic, and hence is closer to Dov's interests than much of my work.

To whet the appetite of any modal logicians who may be reading this, let me challenge them with the following questions:
\begin{center} \textsf{What is the Stone space of the free modal algebra?\\
Which kind of  set theory does it provide a model for?}
\end{center}
(If you can answer these questions, you are excused from reading this paper.)

Lectures on versions of this material were given on several occasions in 1988-89, including:
the 1988 British Colloquium on Theoretical Computer Science in Edinburgh;
the Workshop on Logic from Computer Science held at MSRI Berkeley in 1989; and the International Symposium on Topology held in Oxford in July 1989.
The lecture has been referenced in several subsequent publications, e.g. \cite{OMM,BM,ABH,FH1}.
It has always been rather on my conscience that I had not written it up for publication.
That is what I have now done. I have added a few references to later work, and tidied up one or two points of technical detail, but essentially this is a straightforward write-up of the 1988 lecture. It is an extended discussion of a single example, which is used to illustrate some wider themes. It may not be \emph{new}, but I hope that it can still be \emph{useful}. In that sense, it is offered in the same spirit as the Handbook, which Dov and Tom and I were working on in those years.

\section{Introduction}
Our topic in this paper is a single example: the space of \emph{finitary non-well-founded sets}, which we study from many different points of view: process models; metric approximation; topology and the Vietoris construction; modal logic and Stone duality; and domain theory. We obtain five distinct characterizations of this space. We also study some basic features of its behaviour as a set-theoretic universe.

The main purpose of this study is to illustrate some general themes, in particular:
\begin{itemize}
\item Alternative descriptions of models:
\begin{itemize}
\item relating different semantics, and semantics and logic
\item deriving one systematically from the other.
\end{itemize}

\item Taking our cue from Domain theory \cite{Sco70}, the right level for mathematical modelling of computation is
\begin{itemize}
\item not the strictly finite
\item not the unboundedly infinite
\item but the \emph{finitary} \ie those objects appearing as ``limits'' of finite ones.
\end{itemize}

\end{itemize}
Our example will arise from a topologizing of non-well-founded set theory \cite{FH,Acz}.
We will end up with something logically weaker, but  computationally more meaningful --- and perhaps with some logical interest in its own right.
Along the way, we shall touch on numerous points in denotational semantics, concurrency theory, modal logic and set theory.

\paragraph{Acknowledgements} We will draw on ideas from many sources, notably: Peter Aczel on non-well-founded set theory \cite{Acz}; Nivat and de Bakker and Zucker on denotational models of processes based on ultrametrics \cite{Niv,deBZ}; Milner, Hennessy, Park, Bergsta and Klop on process algebra and bisimulation \cite{HM,Par,BK}; and Smyth's topological perspective on computation \cite{Smy}.
We shall also draw extensively on our own work on Domain Theory in Logical Form, A Domain Equation for Bisimulation, and Total vs. Partial Objects in Denotational Semantics \cite{Abr85,Abr87,Abr91,Abr91a}. The reader in search of more details is directed to these papers.

\section{First approach: finitary sets as limits of finite ones}
We begin with the \emph{hereditarily finite sets} $\HF$. These have played a role in logic as a ``roomier'' and more structured alternative to the natural numbers $\Nat$.
They can be defined inductively by
\begin{equation}
\frac{x_1 \in \HF , \ldots , x_n \in \HF}{\{ x_1 , \ldots , x_n\} \in \HF}
\end{equation}
(The base case is $n=0$, which gives $\varnothing \in \HF$.)
More formally, we can write the inductive definition
\[ 
\begin{array}{ccc}
  V_0 &   = &   \varnothing \\
  V_{k+1} &   = &   \power{V_k}\\
  V_{\omega} &   = &   \bigcup_{k \in \omega} V_k
\end{array}
\]
and define $\HF = V_{\omega}$.

This definition relies implicitly on the fact that the full powerset construction, applied to a finite set, can only yield finite subsets.
A more conceptually pleasing definition is to use the finite powerset constructor $\powerf{\cdot}$ explicitly. We can then  write a fixpoint equation for $\HF$:
\begin{equation}
\label{HFd3}
\HF = \mu X. \, \powerf{X} = \bigcup_{k \in \omega} \powerfk{\varnothing}
\end{equation}
by the least fixpoint theorem, since $\powerf{\cdot}$, unlike $\power{\cdot}$, is continuous.

This starts another train of thought. The finite powerset construction $\powerf{\cdot}$ builds \emph{free semilattices} over sets --- \ie $(\Powerf , \{\cdot\}, \bigcup )$ is a monad on $\Set$ whose algebras are the semilattices \cite{Mac}. Thus Definition~(\ref{HFd3}) suggest an algebraic description of $\HF$.

\subsection{$\HF$ as the free process algebra with one action}

We  consider  the following  algebraic theory:
\subsubsection*{A fragment of CCS \cite{Mil}}
The signature comprises a binary operation $+$, a unary \emph{prefixing operator} $e\cdot$, and a constant $0$.
The equations are:
\begin{eqnarray}
x + 0 & = & x \\
x + y & = & y + x \\
x + (y + z) & = & (x+y) + z \\
x + x & = & x
\end{eqnarray}
which say that an algebra $(S, +, 0)$ is a semilattice  (equivalently, an idempotent, commutative monoid). There are no axioms for the prefixing operator.
We consider the free algebra with no generators.
(Almost equivalently, we can consider the algebra BA of Bergstra and Klop \cite{BK} over one generator.)

%\subsubsection{Basic Algebra (BA) \cite{BK}}
%The signature has two binary operations $+$ and $\cdot$ (the latter being \emph{sequential %composition), and a constant $0$. The axioms are those for the CCS fragment, plus the following:
%\begin{eqnarray}
%(x\cdot y)\cdot z & = & x\cdot (y\cdot z) \\
%(x + y)\cdot z & = & x\cdot z + y\cdot z \\
%0 \cdot x & = & 0
%\end{eqnarray}
%In this case, we consider the free elgebra on one generator, $e$.

The intention  is indicated by the following semantics:
\begin{eqnarray*}
\lsem 0 \rsem & = & \varnothing \\
\lsem P + Q \rsem & = & \lsem P \rsem \cup \lsem Q \rsem \\
\lsem e \cdot P \rsem & = & \{ \lsem P \rsem \} .
\end{eqnarray*}
Thus iterations of prefixing allows arbitrary (finite) levels of nesting of sets, while within each level the semilattice axioms give the structure of a finite powerset.\footnote{Note that, if we built \emph{complete} semilattices at each level, and iterated the construction through all ordinals, we would (disregarding  set-theoretic subtleties) be building a full set-theoretic universe $V$ as the initial algebra for this theory! Subsequently, Joyal and Moerdijk have developed a sophisticated  form of ``algebraic set theory'' in a general categorical setting, which in essence uses this approach to the description of set-theoretic universes \cite{JM}.}

We can therefore define a membership relation on this free algebra:
\[ [P] \in [Q] \;\; \Longleftrightarrow \;\; \exists R. \, Q = e\cdot P + R . \]
Note that this is equivalently expressed as the usual \emph{transition relation} on processes:
\[  [P] \in [Q] \;\; \Longleftrightarrow \;\;  Q \stackrel{e}{\longrightarrow} P  \]
This leads to a more computational, process-like description of hereditarily finite sets, as \emph{finite, rooted trees}.
\begin{example}
%\[ 
%\Tcircle{\;} \quad \leftrightsquigarrow \quad \vn 
%\]
\[ \pstree[levelsep=5ex]{\Tcircle{\;}}{\Tcircle{\;}
\pstree{\Tcircle{\;}}{\Tcircle{\;}}}
\quad \leftrightsquigarrow \quad \{ \vn , \{ \vn \}\}
\]
\end{example}

We can characterize equality on process terms or finite rooted trees directly in terms of the transition relation: this is the now-classical notion of \emph{bisimulation} \cite{Par,HM,Mil}:
\begin{eqnarray}
T_1 \sim T_2  & \Longleftrightarrow & T_1 \rightarrow T_{1}' \;\; \Rightarrow \;\; \exists T_{2}' . \, T_2 \rightarrow T_{2}' \; \wedge \; T_{1}'  \sim T_{2}' \\
& \wedge & T_2 \rightarrow T_{2}' \;\; \Rightarrow \;\; \exists T_{1}' . \, T_1 \rightarrow T_{1}' \; \wedge \; T_{1}'  \sim T_{2}' .
\end{eqnarray}
This recursively simulates extensionality:
\begin{eqnarray}
\label{exdef}
x = y & \Longleftrightarrow & \forall z. \, z \in x \; \leftrightarrow \; z \in y \\
x \sim y & \Longleftrightarrow &  \forall z. \, z \in x \;\; \Rightarrow \;\; \exists w. \, w \in y  \; \wedge z \sim w \\
& & \wedge \;\;  z \in y \;\; \Rightarrow \;\; \exists w. \, w \in x  \; \wedge z \sim w .
\end{eqnarray}
This is indeed reminiscent of the way that extensionality is imposed in various constructions of models of set theory (e.g. Boolean-valued models in forcing), and pointed the way to Aczel's work on non-well-founded set theory.

The recursion in the above \emph{coinductive definition} can,  in this simple case of finite trees, be unwound inductively as follows  \cite{HM}:
\begin{eqnarray*}
T_1  \sim_0 T_2 & \equiv & \mbox{true} \\
T_1  \sim_{k+1} T_2 & \equiv & T_1 \rightarrow T_{1}' \; \Rightarrow \; \exists T_{2}' . \, T_2 \rightarrow T_{2}' \; \wedge \; T_{1}'  \sim_k T_{2}' \\
& & \wedge \\
& & T_2 \rightarrow T_{2}' \; \Rightarrow \; \exists T_{1}' . \, T_1 \rightarrow T_{1}' \; \wedge \; T_{1}'  \sim_k T_{2}'  \\
T_1 \sim T_2  & \equiv & \forall k. \, T_1 \sim_k T_2 .
\end{eqnarray*}

\subsection{A metric  on $\HF$}
Since every set is the directed union of its finite subsets, we need a more refined notion of limit to filter out computationally unrealistic sets. Thus topology begins to enter the picture.

The rooted tree representation of $\HF$, and its associated bisimulation equivalence. gives rise to a computationally meaningful notion of distance between hereditarily finite sets:
\begin{center}
\fbox{The more work you have to do to distinguish between the sets, the closer they are.}
\end{center}
We equate `work' with the depth to which we have to probe the trees to discover a difference which distinguishes them as representations of sets. This leads to the following definition:
\[ d(S, T) \;\; = \;\; \begin{cases} 0, & \qquad S \sim T \\
2^{-k}, \;\; \text{least $k$ such that $S \not\sim_k T$} & \text{otherwise}
\end{cases}
\]
\begin{example}
\[ d(\varnothing , \{ \varnothing \} ) = 1/2, \qquad \qquad d(\{ \varnothing \}, \{  \varnothing , \{ \varnothing\}\}) = 1/4 .
\]
\end{example}
We can give an  inductive definition of this distance function, defined directly on hereditarily finite sets, as follows:
\[
\begin{array}{lcr}
(1)  &   d(S, T) = 0 &   \text{If  $S=T$} \\
(2.1)   &   d(\varnothing , T) = d(S, \varnothing ) = 1/2 & \text{If $S \neq \varnothing \neq T$}  \\
(2.2) & d(S, T) = 1/2 \max (\sup_{s \in S} \inf_{t \in T} d(s, t), \; \sup_{t \in T} \inf_{s \in S} d(s, t)) & \text{If $S \neq \varnothing \neq T$}
\end{array}
\]
Note how, in the inductive case (2.2), the standard \emph{Hausdorff metric} \cite{Dug} appears as a ``minimaxing'' calculation of $\not\sim_k$. We can also view the Hausdorff metric as the interpretation of the definition of extensional equality (\ref{exdef}) in ``real-valued logic''  \cite{Law}.

This definition does indeed yield a metric, and in fact an ultra-metric:
\[ d(S, T) \leq \max (d(S,U), d(U, T)) . \]

Now that we have an (ultra-)metric space $(\HF , d)$, we can look at Cauchy sequences to see which limits should arise.
\begin{example}
The sequence $\vn , \{\vn\}, \{\{\vn\}\}, \{\{\{\vn\}\}\}, \ldots $
can be visualized as follows, with successive distances indicated:
\label{dex1}
\[
\Tcircle{\;}
\quad 1/2 \quad
\pstree[levelsep=5ex]{\Tcircle{\;}}{\Tcircle{\;}}
\quad 1/4 \quad
\pstree[levelsep=5ex]{\Tcircle{\;}}{\pstree{\Tcircle{\;}}{\Tcircle{\;}}}
\quad 1/8 \quad
\pstree[levelsep=5ex]{\Tcircle{\;}}{\pstree{\Tcircle{\;}}{\pstree{\Tcircle{\;}}{\Tcircle{\;}}}}
\quad \cdots
\]
\end{example}
\begin{example}
\label{dex2}
\[
\Tcircle{\;}
\quad 1/2 \quad
\pstree[levelsep=5ex]{\Tcircle{\;}}{\Tcircle{\;}}
\quad 1/4 \quad
\pstree[levelsep=5ex]{\Tcircle{\;}}{\Tcircle{\;}
\pstree{\Tcircle{\;}}{\Tcircle{\;}}}
\quad 1/8 \quad
\pstree[levelsep=5ex]{\Tcircle{\;}}{\Tcircle{\;}
\pstree{\Tcircle{\;}}{\Tcircle{\;}}
\pstree{\Tcircle{\;}}{\pstree{\Tcircle{\;}}{\Tcircle{\;}}}}
\quad \cdots
\]
The corresponding sequence of sets is
\[ \vn ,\;\; \{ \vn \} , \;\; \{ \vn , \{ \vn \}\} , \;\; \{ \vn , \{ \vn \} , \;\; \{ \{ \vn \} \} \} , \ldots \]
\end{example}
We now finally reach our first definition of the \emph{universe of finitary sets} $\Fin$:
\begin{definition}
We define $\Fin$ to be the metric completion of $(\HF, d)$; \ie (equivalence classes of) Cauchy sequences in $\HF$.
Membership is defined by:
\[ [S_n ] \in [T_n ] \;\; \equiv \;\; \forall n. \, \exists m.\, S_n \in T_m . \]
\end{definition}
Note that in Example~(\ref{dex1}), these definitions yield $S \in S$! Thus the process of metric completion automatically gives rise to non-well-founded sets.

\section{Interlude: Domain Equations}
We consider domain equations $X \cong F(X)$.
An \emph{interpretation} of such an equation is provided by specifying a category $\CC$, and an endofunctor $F : \CC \longrightarrow \CC$. We are generally interested in an \emph{extremal solution} of such an equation: either an \emph{initial algebra} $\alpha : FA \rightarrow A$, or a \emph{final coalgebra} $\beta : A \rightarrow FA$. (The Lambek lemma \cite{Lam} guarantees that initiality or finality does indeed imply that the arrow is an isomorphism).
These concepts generalize the lattice-theoretic notions of least and greatest fixpoint.
In most cases of interest, initial algebras can be constructed as colimits:
\[ \lim_{\rightarrow} ( \Init \rightarrow F\Init \rightarrow F^2\Init \rightarrow \cdots ) \]
generalizing the construction of the least fixpoint as $\bigvee_k F^k \bot$, while final coalgebras can be constructed as limits:
\[ \lim_{\leftarrow} ( \Term \leftarrow F\Term \leftarrow F^2\Term \leftarrow \cdots ) \]
generalizing the construction of the greatest fixpoint as $\bigwedge_k F^k \top$.
(In the domain theoretic case, the \emph{limit-colimit coincidence} \cite{Sco70} means that the two constructions coincide, and we obtain \emph{both} an initial algebra \emph{and} a final coalgebra.)
For the finitary case we are considering, the functors will be $\omega$-continuous in the appropriate sense, and the limit or colimit can be taken with respect to the $\omega$-chain of finite iterations.

In \cite{Abr85}, the view was taken that 
\begin{center}
\fbox{Domain equations can, and should, be viewed schematologically.}
\end{center}
This means that a given language for describing functors can be interpreted in different categories, and the solutions compared. A general result of this kind is presented in \cite{Abr85}.
We shall follow this point of view in our treatment of $\Fin$.

\section{Second description of $\Fin$}
We will now characterize the space of finitary sets $\Fin$ as the solution of a domain equation. We take as our ambient category $\CUM$, the category of compact ultrametric spaces and continuous maps.
Our functor $F : \CUM \longrightarrow \CUM$ is
\[ F(X) = \Term + \powerH{X} \]
where $\powerH{X}$ is the set of all non-empty closed subsets of $X$ (note that `closed' and `compact' are equivalent in this context), with the Hausdorff metric:
\[ d(S, T) =  \max (\sup_{s \in S} \inf_{t \in T} d(s, t), \; \sup_{t \in T} \inf_{s \in S} d(s, t)) . \]
The $\Term + \cdots$ term is to code the empty set. We apply a contraction factor (for convenience, $1/2$) in taking the disjoint union $X + Y$, so that the empty set is a distance of at least $1/2$ from any non-empty set, and we shrink the Hausdorff metric by $1/2$ in each recursive iteration.

We now provide our second characterization of $\Fin$:
\begin{proposition}
$\Fin$ is the final coalgebra of the functor $F$. Moreover, $F$ is cocontinuous, so $\Fin$ is constructed as the limit of the $\omega^{\text{op}}$-chain
\[ \lim_{\leftarrow} ( \Init \leftarrow F\Init \leftarrow F^2\Init \leftarrow \cdots ) \]
\end{proposition}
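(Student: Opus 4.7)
The plan is to proceed in four steps, corresponding to the two assertions of the proposition (existence as a final coalgebra, and construction as an $\omega^{\text{op}}$-limit) and the identification of that coalgebra with $\Fin$.

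First I would verify that $F$ is a well-defined endofunctor on $\CUM$. This requires two standard checks: that $\powerH{X}$, equipped with the Hausdorff metric, is a compact ultrametric space whenever $X$ is (the ultrametric inequality transfers through the $\sup$-$\inf$ definition, and compactness of $\powerH{X}$ for compact $X$ is the classical Hausdorff/Kuratowski theorem), and that the contracted coproduct $\Term + (-)$ with factor $1/2$ preserves compactness and the ultrametric property. Functorial action on a continuous (non-expansive) map $f : X \to Y$ is given by direct image $S \mapsto f(S)$, which is non-expansive for the Hausdorff metric, and compatibility with the $\Term$-summand is trivial.

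Next I would establish cocontinuity (really, $\omega^{\text{op}}$-continuity, since we are taking a limit). In $\CUM$, the limit of an $\omega^{\text{op}}$-chain $(X_k, p_k \colon X_{k+1} \to X_k)$ can be realized concretely as the space of coherent sequences $(x_k)_{k<\omega}$ with $p_k(x_{k+1}) = x_k$, metrized by $d((x_k),(y_k)) = \sup_k 2^{-k} d_k(x_k,y_k)$ (or any equivalent gauge). I would then show that $F$ commutes with such limits up to isometric isomorphism. The coproduct summand is immediate since $\Term + (-)$ with the $1/2$ contraction clearly preserves limits. The substantive part is that $\powerH$ preserves the limit: a coherent tower $(S_k)$ with $S_k \in \powerH{X_k}$ and $\powerH{p_k}(S_{k+1}) = S_k$ can be assembled into a single compact subset of $\lim_{\leftarrow} X_k$, namely $\{(x_k) : x_k \in S_k\; \forall k\}$, and conversely every compact subset of the limit arises in this way. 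This is where I expect to have to be careful; it is the categorical shadow of the fact that a projective limit of non-empty compacta is non-empty (essentially K\"onig's lemma / compactness), and the main technical obstacle lies here in checking that the Hausdorff metric on the assembled compact matches the limit metric on $\powerH$-values.

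With cocontinuity in hand, the limit $L = \lim_{\leftarrow}(F^k\Term)$ carries a canonical coalgebra structure $L \cong FL$ obtained by the isomorphism $F(\lim_{\leftarrow} F^k\Term) \cong \lim_{\leftarrow} F^{k+1}\Term \cong L$, and a standard argument (Ad\'amek's theorem applied to the terminal-coalgebra sequence, as in \cite{Abr87}) shows $L$ is the final $F$-coalgebra.

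Finally I would identify $L$ with $\Fin$. Unwinding the chain, $F^k\Term$ is precisely the space of hereditarily finite sets truncated at depth $k$, carrying the metric $d$ restricted to depth $k$; the connecting map $F^{k+1}\Term \to F^k\Term$ is exactly truncation, and two elements of $\HF$ agree in $F^k\Term$ iff they are $\sim_k$-equivalent, iff $d(S,T) \leq 2^{-k}$. Hence a coherent sequence in $L$ is the same datum as a Cauchy-equivalence class of sequences in $\HF$, under the metric $d$. The membership relation $[S_n] \in [T_n] \Leftrightarrow \forall n.\,\exists m.\, S_n \in T_m$ transports under this bijection to the $F$-coalgebra structure on $L$, which completes the identification $L \cong \Fin$.
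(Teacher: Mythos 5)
Your proposal is correct and follows exactly the route the paper intends: the paper gives no detailed proof, but its picture of the terminal sequence $\Term \leftarrow F\Term \leftarrow F^{2}\Term \leftarrow \cdots$ and its remark that the metric completion of the finite levels arises from the limit construction is precisely your outline (functoriality and $\omega^{\mathrm{op}}$-continuity of $F$, with the substantive point being that $\powerh$ carries coherent towers of compacta to compact sets of threads and back; the limit as final coalgebra; identification of $F^{k}\Term$ with the depth-$k$ truncations of $\HF$, so that the limit is the completion of $(\HF , d)$). Two minor bookkeeping points: you rightly read the chain as starting from $\Term$ (the $\Init$ in the displayed statement is a slip, as the paper's own diagram shows), and the gauge correspondence is $S \sim_{k} T$ iff $d(S,T) \leq 2^{-(k+1)}$, an off-by-one that does not affect the argument.
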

We give a picture of the first few terms of the construction:
\begin{diagram}
\ast & \lTo & \{ \ast\} & \lTo & \{\vn ,\{\ast\}\} & \lTo & \{\vn ,\{\vn , \{\ast\}\}\} & \lTo & \cdots \\
& \luTo & \uDotline & \luTo & \uDotline & \luTo & \uDotline & & \\
& & \vn &  & \{ \vn \} & & \{ \vn , \{ \vn\}\} &  &   \cdots \\
\Term & & F\Term & & F^2\Term & & F^3\Term
\end{diagram}

This characterization analyzes our ad hoc inductive definition into the iterated application of a general construction; similarly, the metric completion of the finite levels arises systematically from the general notion of limit used to construct the final coalgebra.

We therefore articulate the following principle:
\begin{center}
\fbox{A domain equation yields more information than an ad hoc construction.}
\end{center}

The metric structure on $\Fin$ has some  significance, e.g. we can apply the Banach fixpoint theorem to deduce that the equation
$x = \{ x\}$
has a unique solution, since the map $x \mapsto \{x\}$ is contractive, and similarly for
\[ x \; = \; \{\vn\} \cup \{\{y\} \mid y \in x\} \]
(yielding our previous examples~(\ref{dex1}) and (\ref{dex2})).
More generally, we get the existence of unique solutions for \emph{guarded equations} (those in which the recursion variables appear under the scope of set-forming braces) \cite{Acz,BM}.
However, for many purposes the topological structure suffices --- and in any event, the precise definition of the metric is irrelevant to the structure. This leads naturally to the following \\
\textbf{Question:} Which metric topologies arise in $\CUM$?

\begin{proposition}
The category $\CUM$ is equivalent to the category $\Stone$ of second-countable Stone spaces.
\end{proposition}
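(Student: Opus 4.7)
The plan is to define a forgetful functor $U : \CUM \to \Stone$ sending a compact ultrametric space to its underlying topological space (and continuous maps to themselves), and to exhibit a quasi-inverse by showing that every second-countable Stone space admits a compatible ultrametric making it compact. Faithfulness of $U$ is automatic, and fullness holds because continuous maps between metric spaces are exactly the continuous maps between their induced topologies, so the whole content lies in showing $U$ is well-defined and essentially surjective.

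For well-definedness, I would observe that any ultrametric space is totally disconnected (every open ball is also closed, and the ultrametric inequality makes balls form a partition at each radius), any metric space is Hausdorff, and any compact metric space is second countable (take a countable dense set and balls of rational radii around it). So $U(X)$ really is a second-countable Stone space.

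For essential surjectivity — the main obstacle — I would take a second-countable Stone space $Y$ and construct a refining sequence of finite clopen partitions $\mathcal{P}_0 \leq \mathcal{P}_1 \leq \mathcal{P}_2 \leq \cdots$ whose union generates the topology. Existence of such a sequence follows from: (i) $Y$ has a countable basis of clopens $\{U_n\}_{n \in \omega}$ (second countability plus the fact that clopens form a basis in a Stone space); (ii) the Boolean algebra generated by $U_0, \ldots, U_n$ is finite and induces a finite clopen partition $\mathcal{P}_n$ of $Y$; (iii) these partitions refine each other and together generate the topology. Then define
\[ d(x,y) = \begin{cases} 0, & x = y, \\ 2^{-n}, & n \text{ least with } \mathcal{P}_n \text{ separating } x,y. \end{cases} \]
The partition property ensures that $d$ is a well-defined ultrametric, and the fact that the $\mathcal{P}_n$ generate the topology ensures that the metric topology equals the given one; compactness of $(Y,d)$ is then inherited from the original space. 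This gives an object $X \in \CUM$ with $U(X) = Y$.

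Finally, I would verify that this construction, together with the identity on morphisms (using that continuous maps between compact Hausdorff spaces are the same as continuous maps once any compatible metric structure is fixed), assembles into a quasi-inverse to $U$, establishing the equivalence. The delicate point I expect to have to argue carefully is that the refining sequence of finite clopen partitions really does generate the topology — this is where second countability is essential, and where one sees why the result fails for arbitrary (non-metrizable) Stone spaces.
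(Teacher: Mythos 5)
Your proposal is correct and follows essentially the same route as the paper: one direction by observing that open balls in an ultrametric are clopen (so a compact ultrametric space is a second-countable Stone space), and the other by enumerating a countable clopen basis/Boolean algebra and defining $d(x,y)=2^{-n}$ with $n$ the least index at which $x$ and $y$ are separated --- your cumulative finite partitions $\mathcal{P}_n$ yield exactly the paper's metric. Your write-up merely spells out more of the routine verifications (ultrametric inequality, agreement of topologies, compactness) that the paper leaves implicit.
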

\begin{proof}
In one direction, note that an open  ball in an ultrametric is a closed set. Indeed, if $z \in B(x;\epsilon )$ and $d(x,y) \geq \epsilon$, then
\[ \epsilon \leq d(x,y) \leq \max (d(x,z), d(z,y)) \]
so $d(z,y) \geq \epsilon$, and $z \not\in B(y;\epsilon )$. Thus $y \not\in \overline{B(x;\epsilon )}$.
This implies that the open balls form a clopen base in the metric topology.
Since $M$ is compact, it is second-countable.

For the converse, if a Stone space $S$ is second countable, with dual Boolean algebra
$B = \{ b_{n} \mid n \in \Nat \}$, define
\[ d(x, y) = \begin{cases}
0 & \quad x=y \\
2^{-n}, \quad  \text{least $n$ such that $x \in b_{n} \Leftrightarrow y \not\in b_{n}$} & \text{otherwise.}
\end{cases}
\]
\end{proof}

This equivalence leads us to our next characterization of $\Fin$.

\section{Third description of $\Fin$}
We now characterize $\Fin$, \textit{qua} topological space, as the solution of a domain equation in $\Stone$. To do this, we need to answer the following question:
\begin{center}
\fbox{What is the topological construction analogous to the Hausdorff metric powerspace $\powerh$?}
\end{center}
The answer is provided by the \emph{Vietoris construction}  $\powerv$ \cite{Joh}.
Although it can be defined much more generally, we shall  view $\powerv$ as a functor on $\Stone$. Given a Stone space $S$, $\powerv (S)$ is the set of all compact (which since $S$ is compact Hausdorff, is equivalent to closed) subsets of $S$, with topology generated by
\begin{eqnarray}
\Box U & = & \{ C \mid C \subseteq U \} \\
\Diamond U & = & \{ C \mid C \cap U \neq \vn \} 
\end{eqnarray}
where $U$ ranges over the open sets of $S$. We can read $\Box U$ as the set of all $C$ such that $C$ \emph{must} satisfy $U$, and $\Diamond U$ as the set of $C$ such that $C$ \emph{may} satisfy $U$. The allusion to modal logic  notation is thus deliberate, and we shall shortly see a connection to standard modal notions.
Note that in our definition of the Vietoris powerspace, the empty set \emph{is} included. 

We now compare  the Vietoris topology, and the metric topology arising from the Hausdorff powerspace metric. Given a metric space $M$, we write $\MT{M}$ for the topological space arising by taking the metric topology on $M$.
\begin{proposition}
For any $M$ in $\CUM$: $\MT{\Term + \powerH{M}} = \powerv (\MT{M})$.
\end{proposition}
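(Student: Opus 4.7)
The plan is to exhibit a bijection between the underlying sets of $\MT{\Term + \powerH{M}}$ and $\powerv(\MT{M})$ and then show the two topologies coincide by comparing subbases. Under the obvious identification, the point $*$ of $\Term$ corresponds to $\vn \in \powerv(\MT{M})$, while elements of $\powerH{M}$ correspond to the non-empty closed subsets of $M$.

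First I would handle the empty set separately. In $\MT{\Term + \powerH{M}}$ the point $*$ is isolated, because the $1/2$ contraction factor in the disjoint-union metric puts $*$ at distance at least $1/2$ from every non-empty closed subset. On the Vietoris side, $\{\vn\} = \Box \vn$ is open, since $\vn$ is an open subset of $M$. This deals with $\vn$ uniformly on both sides.

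Next I would show the Vietoris topology is contained in the metric topology by checking that each subbasic open $\Box U$ and $\Diamond U$ (with $U$ open in $\MT{M}$) is metric-open. For $\Diamond U$: given a non-empty closed $C$ with $x \in C \cap U$, pick $\epsilon > 0$ with $B(x;\epsilon) \subseteq U$; then any $C'$ with Hausdorff distance $< \epsilon$ from $C$ meets $B(x;\epsilon) \subseteq U$, so $C' \in \Diamond U$. For $\Box U$: given a closed $C \subseteq U$, compactness of $C$ yields an $\epsilon > 0$ with the $\epsilon$-thickening $C^{\epsilon}$ contained in $U$, so any $C'$ with Hausdorff distance $< \epsilon$ from $C$ is contained in $C^{\epsilon} \subseteq U$.

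For the reverse inclusion, I would use the standard basis for the Vietoris topology consisting of sets of the form $\langle U_1, \ldots, U_n\rangle = \Box(U_1 \cup \cdots \cup U_n) \cap \Diamond U_1 \cap \cdots \cap \Diamond U_n$, and show that each open metric ball $B(C;\epsilon)$ around a non-empty closed $C$ is a union of such sets. Given $C$, cover it by finitely many balls $U_i = B(x_i; \epsilon/2)$ with $x_i \in C$, which exists since $C$ is compact. Then $C \in \langle U_1, \ldots, U_n \rangle$, and a straightforward minimaxing argument shows any $C' \in \langle U_1, \ldots, U_n \rangle$ satisfies $d_H(C, C') < \epsilon$: each $x \in C$ lies in some $U_i$ and thus within $\epsilon$ of a point of $C' \cap U_i$, while each $x' \in C'$ lies in some $U_i$ and is within $\epsilon/2$ of $x_i \in C$.

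The main obstacle is really just the book-keeping around the empty set and using compactness of $M$ (and of closed subsets $C$) in the $\Box$ direction to convert the pointwise condition $C \subseteq U$ into a uniform thickening; everything else is formal manipulation of the Hausdorff and Vietoris definitions.
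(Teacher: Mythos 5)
Your argument is correct, but it cannot be ``the same as the paper's'': the paper gives no proof of this proposition at all, remarking only that it is an instance of a much more general comparison of metric and topological hyperspace constructions (in essence Michael's classical theorem that for a compact metric space the Vietoris topology on the space of closed subsets coincides with the Hausdorff metric topology, adjusted here so that the empty set appears as the adjoined isolated point $\Term$). So what you supply is a genuinely different, self-contained route: identify $\ast$ with $\vn$ and note both are isolated ($\ast$ by the $1/2$ separation between the summands, $\vn$ because $\{\vn\} = \Box\vn$ is a Vietoris subbasic open); get Vietoris $\subseteq$ metric by showing $\Diamond U$ and $\Box U$ are metric-open, the latter by converting $C \subseteq U$ into an $\epsilon$-thickening via compactness of $C$; and get metric $\subseteq$ Vietoris by squeezing a basic set $\langle U_1,\ldots,U_n\rangle$ built from a finite cover of $C$ by $\epsilon/2$-balls centred in $C$ between $C$ and the ball $B(C;\epsilon)$. (The $1/2$ rescaling of the Hausdorff metric in the paper's functor is harmless, since rescaling does not change the topology.) What the paper's appeal to generality buys is uniformity over a whole class of domain equations in $\CUM$ versus $\Stone$; what your computation buys is an elementary verification of exactly the instance needed.

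Two small points should be made explicit. In the $\Diamond U$ step, take $\epsilon \leq 1/2$ so that the metric ball around a non-empty $C$ cannot contain $\ast$ (for $\Box U$ this is not needed, since $\vn \in \Box U$ anyway). In the final estimate, your bounds as stated give, for each $x \in C$, a point of $C'$ within distance $< \epsilon$, and a supremum of quantities each strictly below $\epsilon$ need not be strictly below $\epsilon$; this is repaired either by noting that on compact sets the suprema and infima defining $d_H$ are attained, or more simply, since we are in $\CUM$, by the ultrametric inequality $d(x,x') \leq \max(d(x,x_i),d(x_i,x')) < \epsilon/2$, which yields $d_H(C,C') \leq \epsilon/2 < \epsilon$ outright. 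With these adjustments your proof is complete.
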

This result is in fact true in much greater generality, but the above is sufficient for our purposes.

Now we obtain the following description of $\Fin$.
\begin{proposition}
The space of finitary sets in its metric topology, $\MT{\Fin}$, is the final coalgebra of the Vietoris functor on $\Stone$.
\end{proposition}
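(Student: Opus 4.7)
The plan is to transport the final coalgebra structure established for $F = \Term + \powerH{-}$ on $\CUM$ across the equivalence $\MT{-} : \CUM \simeq \Stone$ from the previous proposition. First I would observe that the functor $\MT{-}$ is an equivalence of categories, hence in particular preserves (and reflects) limits, final coalgebras, and isomorphisms. So everything reduces to showing that, under this equivalence, the functor $F$ on $\CUM$ corresponds to the Vietoris functor $\powerv$ on $\Stone$.

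Second, I would upgrade the earlier object-level proposition $\MT{\Term + \powerH{M}} = \powerv(\MT{M})$ to a natural isomorphism of functors $\MT{-} \circ F \cong \powerv \circ \MT{-}$. On objects this is already given; on morphisms one checks that a continuous map $f : M \to N$ of compact ultrametric spaces induces the same map $F(f)$ as $\powerv(\MT{f})$, since both act by direct image on nonempty closed subsets together with the obvious treatment of the empty set coded by $\Term$. The key topological point, already implicit in the proof of the previous proposition, is that in the metric topology on $F(M)$ the point coding $\vn$ is isolated (by the $1/2$ contraction factor on the coproduct), and that the Hausdorff metric topology on $\powerH{M}$ coincides with the subspace topology induced by the Vietoris $\Box$/$\Diamond$ subbase restricted to nonempty closed sets. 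This identifies $F$ with $\powerv$ up to the equivalence.

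Third, I would invoke the general categorical fact that an equivalence of categories carrying one endofunctor to another (up to natural isomorphism) carries final coalgebras of the first to final coalgebras of the second. Applying $\MT{-}$ to the final coalgebra $\beta : \Fin \to F(\Fin)$ in $\CUM$ produces a coalgebra $\MT{\beta} : \MT{\Fin} \to \MT{F(\Fin)} \cong \powerv(\MT{\Fin})$ in $\Stone$, and this coalgebra inherits finality from $\beta$. Equivalently, one can read off the conclusion from the limit presentation: since $F$ is cocontinuous, $\Fin = \lim_{\leftarrow} F^n \Term$ in $\CUM$; since $\MT{-}$ preserves limits and sends $F$ to $\powerv$ and $\Term$ to $\Term$, we obtain $\MT{\Fin} = \lim_{\leftarrow} \powerv^n \Term$ in $\Stone$, which (once $\powerv$ is known to be cocontinuous on $\Stone$) is the standard construction of the final coalgebra.

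I expect the main obstacle to be the naturality/functoriality bookkeeping in step two: one has to be a little careful that the equality $\MT{F(M)} = \powerv(\MT{M})$ is compatible with how morphisms are handled on each side, including the booking of $\vn$ through the $\Term$ summand and the use of the Hausdorff factor $1/2$. Once this naturality is pinned down, the rest is formal transport across the equivalence; the cocontinuity of $\powerv$ on $\Stone$ then either follows from that of $F$ on $\CUM$ via the same equivalence, or can be verified directly from the standard properties of the Vietoris construction on Stone spaces.
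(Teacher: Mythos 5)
Your proposal is correct and follows essentially the same route as the paper, which obtains this proposition as an instance of a general result (cited to \cite{Abr85}) comparing solutions of domain equations in $\CUM$ and $\Stone$ via the equivalence of the two categories and the identification $\MT{\Term + \powerH{M}} = \powerv(\MT{M})$. What you add is simply the explicit bookkeeping (naturality of the identification, treatment of $\vn$ via the $\Term$ summand, transport of finality along the equivalence) that the paper leaves to the cited general result.
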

This is an instance of a general result comparing the solutions of a class of domain equations in $\CUM$ with the corresponding solutions in $\Stone$ \cite{Abr85}. (The class covers many of the ultrametric process models.)
This description will be very useful in  exploring the structure of $\Fin$ as a set-theoretic universe.

Since $\Fin$ is a Stone space, it has a dual Boolean algebra. This can be derived systematically --- as a logic --- from the domain equation used to describe $\Fin$. This illustrates the general programme for relating denotational semantics and program logics developed under the heading of `Domain theory in logical form' \cite{Abr87,Abr91}. It will also lead us to our next description of $\Fin$.

\section{Fourth description of $\Fin$}
We now describe $\Fin$ as \emph{the Stone space of the free modal algebra (on no generators)}.
Here we take a modal algebra to be a Boolean algebra $B$ equipped with a unary operator $\Diamond$ satisfying the axioms
\[ \text{(MA)} \quad \Diamond (a \vee b) = \Diamond a \vee \Diamond b \qquad \qquad \Diamond 0 = 0 . \]
This is the algebraic variety corresponding to the minimal normal modal logic $\KK$ \cite{BRV}.
The Boolean algebra is equipped with a constant  $0$, so the free algebra over no generators can be non-trivial. We shall show that it is indeed non-trivial.

To derive this characterization systematically, we recall that the Vietoris construction can be described logically (or localically \cite{Joh}) as an operation on \emph{theories}. For the coherent case \cite{Joh}, $V(L)$, for a distributive lattice $L$, is the distributive lattice generated by $\Box a$, $\Diamond a$, ($a \in L$), subject to the axioms:
\begin{eqnarray}
\label{VLax1}
\Box (a \wedge b) = \Box a \wedge \Box b &   \qquad &   \Diamond (a \vee b) = \Diamond a \vee \Diamond b \\
\label{VLax2}
 \Box 1  = 1  &   &   \Diamond 0 = 0 \\
\label{VLax3}
 \Box (a \vee b) \leq \Box a \vee \Diamond b &   &   \Diamond (a \wedge b) \geq \Diamond a \wedge \Box b .
\end{eqnarray}
In the boolean case, where we have a classical negation, $\Box$ and $\Diamond$ are inter-definable (e.g. $\Box a = \neg \Diamond \neg a$), and the axiomatization simplifies to (MA). To see this, note firstly that (MA) implies that $\Diamond$ is monotone.  Now from Boolean algebra we derive $a \leq \neg b \vee (a \wedge b)$, from which by monotonicity we obtain
\[ \Diamond a \;\; \leq \;\; \Diamond (\neg b \vee (a \wedge b)) \;\;  = \;\; \Diamond \neg b \; \vee \Diamond (a \wedge b) , \]
and hence $ \Diamond a \wedge \neg \Diamond \neg b \leq \Diamond (a \wedge b)$.
The rest of  (\ref{VLax1})--\ref{VLax3}) follows by duality.
Thus we obtain a construction $\MA{B}$, which when applied to a Boolean algebra $B$ constructs a new Boolean algebra with generators $\Diamond a$, $a \in B$, subject to the Boolean algebra axioms plus (MA).

Now  we can iterate this construction to get the initial solution of
$\BB = \MA{\BB}$ in $\Bool$, the category of Boolean algebras. This is constructed by taking a colimit of the finite iterates $\MAk{\Two}$, starting from the 2-element Boolean algebra $\Two$. This colimit can be constructed concretely as a union, essentially by taking the Lindenbaum algebra of the propositional theory which is inductively generated by these iterates. This propositional theory is the standard modal system $\KK$---but with no propositional atoms. (There \emph{are} constants for \textsf{true} and/or \textsf{false}.) Thus another role for domain equations is revealed, as systematizing the inductive definition of the formulas and inference rules of a logic. A detailed account of how this works in a directly analogous (and rather more complex) case is given in \cite{Abr87,Abr91}.

To see how hereditarily finite sets can be completely characterized by modal formulas (the ``master formula'' of the set; \textit{cf.} \cite{HM}), we define:
\begin{eqnarray}
\Form{\vn} & = & \Box 0 \qquad ( = \neg \Diamond 1) \\
\Form{\{x_{1} , \ldots , x_{n} \}} & = & \Box \bigvee_{i=1}^{n} \Form{x_{i}} \; \wedge\; \bigwedge_{i=1}^{n}  \Diamond \Form{x_{i}} .
\end{eqnarray}
The link between $\BB$ and $\Fin$ is given by \emph{Stone duality} \cite{Joh}:
\begin{proposition}

\begin{enumerate}
\item $\BB$ is isomorphic to the Boolean algebra of clopen subsets of $\Fin$, with the modal operator $\Diamond$ defined by
\[ \Diamond U = \{ S \in \Fin \mid S \subseteq U \} \]
(recall that $\Fin \cong \powerV{\Fin}$).
\item $\Fin$ is isomorphic to Spec $\BB$, the space of ultrafilters over $\BB$, \ie \emph{models} of $\BB$: $f^{-1}(1)$, for Boolean homomorphisms $f : \BB \longrightarrow \Two$. This space is topologised by:
\[ U_{b} = \{ x \in \text{Spec} \; \BB \mid b \in x \} \qquad (b \in \BB )  . \]
\end{enumerate}
\end{proposition}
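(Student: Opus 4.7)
The plan is to leverage Stone duality between the category $\Stone$ of second-countable Stone spaces and the category $\Bool$ of countable Boolean algebras. Since we have already identified $\MT{\Fin}$ as the final coalgebra of the Vietoris functor $\powerv$ on $\Stone$, and since $\BB$ is by definition the initial algebra of the endofunctor $\MA{\cdot}$ on $\Bool$, the two parts of the proposition will both follow from a single key fact: under Stone duality, the Vietoris functor on $\Stone$ is dual to the $\MA{\cdot}$ functor on $\Bool$.

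First I would establish that key duality lemma. Concretely, for a Stone space $X$ with clopen algebra $B$, one needs to show that the clopen algebra of $\powerv (X)$ is presented as a Boolean algebra by generators $\Diamond a$ (equivalently $\Box a$) for $a \in B$, subject to (MA). That the subbasic sets $\Box U$ and $\Diamond U$ (for $U$ clopen in $X$) generate the clopen algebra of $\powerv (X)$ is standard for the Vietoris construction; that the relations among them collapse to (MA) in the Boolean setting has already been derived in the excerpt from the coherent axioms (\ref{VLax1})--(\ref{VLax3}). Freeness — the universal property of this presentation — is the heart of the matter: any modal algebra homomorphism $B \to B'$ dualises to a continuous map $\text{Spec}(B') \to X$, and this must lift to a continuous map $\text{Spec}(B') \to \powerv(X)$ determined by how $\Diamond$ is interpreted on $B'$. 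This is exactly an instance of the ``logic from semantics'' correspondence of \cite{Abr87,Abr91}.

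Given the duality between the functors, the rest is machinery. Stone duality exchanges limits and colimits, and $\Two$ is the dual of the terminal Stone space $\Term$. Therefore, the $\omega^{\text{op}}$-limit $\lim_{\leftarrow} F^n \Term$ in $\Stone$ computing $\Fin$ corresponds to the $\omega$-colimit $\lim_{\rightarrow} \MAk{\Two}$ in $\Bool$ computing $\BB$. Applied to our situation, the clopen algebra of $\Fin$ is therefore isomorphic to $\BB$ as modal algebras, the modal operator on clopens of $\Fin$ being induced via $\Fin \cong \powerv(\Fin)$ by $U \mapsto \{ S \in \Fin \mid S \subseteq U \}$ (reading this as the $\Box$ side of the adjunction, which in the Boolean setting is interdefinable with $\Diamond$). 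This gives part (1). For part (2), apply the Spec functor to both sides: $\text{Spec}(\BB) \cong \text{Spec}(\text{Clopen}(\Fin)) \cong \Fin$ by the Stone representation theorem, with the topology on $\text{Spec}(\BB)$ coinciding with the given one since the $U_b$ form the canonical clopen basis.

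The main obstacle is the duality lemma between $\powerv$ and $\MA{\cdot}$. The subtlety is not in the axioms — these have been shown to match — but in verifying the \emph{freeness} of the presentation: one must check that every continuous map into $\powerv(X)$ is determined by its induced action on clopens via $\Diamond$, with no hidden relations beyond (MA). Once that is in hand, the remaining work is bookkeeping: checking that the limit/colimit constructions agree stage by stage, which is straightforward because $\powerv$ is cocontinuous on $\Stone$ and the dual $\MA{\cdot}$ is accordingly continuous on $\Bool$ restricted to the countable case.
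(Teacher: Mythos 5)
Your proposal is correct and takes essentially the same route as the paper: identify $\MA{\cdot}$ as the Stone dual of the Vietoris functor (the clopen algebra of $\powerv (X)$ being freely presented over the clopens of $X$ by (MA)), then transport the final-coalgebra/initial-algebra constructions across the duality --- which is precisely the ``instance of very general results'' of Domain Theory in Logical Form that the paper invokes rather than proving in detail. Your reading of the operator $U \mapsto \{ S \in \Fin \mid S \subseteq U \}$ via the Boolean interdefinability of $\Box$ and $\Diamond$ is likewise consistent with the paper's remarks.
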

Again, this is an instance of very general results \cite{Abr91}.

\section{Fifth description of $\Fin$}
Our final characterization of $\Fin$ is as the subspace of maximal elements of a \emph{domain} \cite{Sco70,AJ}. Once again, this characterization will arise systematically, by comparing our previous description of $\Fin$ as the solution of a domain equation in $\Stone$ with the solution of a corresponding equation in a category of domains $\Dom$.
A convenient choice for $\Dom$ for our purposes is $\SFP$ \cite{Plo}, the countably-based bifinite domains \cite{AJ}.

We define a domain $D$ as the solution (both initial algebra and final coalgebra) of the equation
\begin{equation}
\label{domdef}
D \;\; = \;\; \powerPO{D} = \;\; \lift{\Term} \oplus \powerP{D} . 
\end{equation}
Here $\powerP{\cdot}$ is the Plotkin or convex powerdomain \cite{Plo,AJ}. The additional term in our domain equation codes in the empty set, by a ``semi-coalesced sum'':
\[
\lift{\Term} \oplus D \;\; = \;\;
\pstree[treemode=U]{\TR{}}{\TR{\ast} 
\pstree[linestyle=none,arrows=-,levelsep=1ex]{\Tfan}{\TR{D}}}
\] 
A similar but more general domain equation, which allows for an arbitrary  set of possible actions, is used in \cite{Abr87,Abr91a} to give a denotational semantics for process calculi, which is fully abstract with respect to strong \emph{partial} bisimulation.

The Plotkin powerdomain on $\SFP$ (or more generally, Lawson-compact) domains, in fact coincides with the Vietoris construction  on those domains seen as topological spaces with the Scott topology  \cite{Smy,Abr87,AJ}.

Given a domain  $D$, we write $\Max{D}$ for the subspace of maximal elements, viewed as a topological space. We can now state our final characterization of $\Fin$.
\begin{proposition}
$\Fin \cong \Max{D}$, where $D$ is the solution of the domain equation (\ref{domdef}).
\end{proposition}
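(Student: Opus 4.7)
The strategy is to exploit the schematological principle emphasized earlier in the paper: the same domain equation can be solved in different categories, and the solutions are related by canonical functors. Here the bridge from $\SFP$ to $\Stone$ is given by the maximal-elements functor $\Max{\cdot}$, and the bridge from the syntactic construction to the topology comes from the fact (cited from \cite{Smy,Abr87,AJ}) that the Plotkin powerdomain on a Lawson-compact (in particular, SFP) domain coincides, as a topological space in the Scott topology, with the Vietoris hyperspace. Combining these two observations, the plan is to show that $\Max{D}$ is a Stone space carrying a final-coalgebra structure for the Vietoris-based functor, and then invoke the previous proposition to identify it with $\Fin$.

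The concrete steps are as follows. First, I would verify that $\Max{D}$, equipped with the subspace topology inherited from the Scott topology on $D$, is a compact, totally disconnected Hausdorff space (that is, an object of $\Stone$); for SFP domains satisfying a suitable equation like (\ref{domdef}), maximality cuts out exactly the ``totally defined'' elements, and one can read off a clopen base from the step-functions / compact approximants of $D$, or equivalently from the modal/Stone-dual logic of the previous section. Second, I would establish the key commutation
\[ \Max{\lift{\Term} \oplus \powerP{D}} \;\cong\; \Term + \powerv (\Max{D}), \]
where the $\lift{\Term}$ summand contributes the lone ``empty set'' point, and maximal elements of $\powerP{D}$ in the Egli--Milner order correspond precisely to non-empty compact subsets of $\Max{D}$ (in the Scott/Vietoris topology, these are closed). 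This uses standard facts about convex powerdomains and Lawson-compactness. Third, applying $\Max{\cdot}$ to the isomorphism $D \cong \lift{\Term} \oplus \powerP{D}$ yields a coalgebra (indeed an isomorphism) $\Max{D} \cong \Term + \powerv(\Max{D})$ in $\Stone$, which is exactly the equation whose final coalgebra was identified with $\MT{\Fin}$ in the previous section.

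Finally, to conclude, I would argue that $\Max{D}$ is not merely \emph{some} coalgebra for this Stone-endofunctor but the \emph{final} one. The cleanest route is the general transfer result of \cite{Abr85}: the $\omega^{\mathrm{op}}$-limit used to build $D$ in $\SFP$ passes to an $\omega^{\mathrm{op}}$-limit of $\Max{F^n \lift{\Term}}$ in $\Stone$, because $\Max{\cdot}$ preserves the relevant limits of the expanding chain of finite iterates (the approximants at each stage are themselves already Stone spaces sitting inside the finite SFP domains). This limit is by construction the final coalgebra of the Stone-endofunctor $X \mapsto \Term + \powerv(X)$, so by uniqueness of final coalgebras and the previous proposition, $\Max{D} \cong \Fin$ in $\Stone$, and hence (via the equivalence $\CUM \simeq \Stone$) as the space of finitary sets.

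\textbf{Main obstacle.} The delicate step is the second one: verifying that $\Max{\cdot}$ interchanges correctly with $\powerP{\cdot}$, and in particular that every maximal element of the Plotkin powerdomain of $D$ is a closed set of \emph{maximal} elements of $D$, rather than merely a Scott-closed convex subset containing some non-maximal elements. This is where the SFP/Lawson-compact hypothesis and the explicit ``semi-coalesced'' shape of the equation do the real work, and it is essentially the content of the general ``total objects'' machinery of \cite{Abr91a}.
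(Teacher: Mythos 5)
Your proposal follows essentially the same route as the paper, which offers no detailed argument here but simply appeals to the general transfer result of \cite{Abr85} relating solutions of domain equations in $\Dom$ to those in $\Stone$ via the maximal-points functor and the coincidence of the Plotkin powerdomain with the Vietoris construction on Lawson-compact domains. Your sketch is a faithful unpacking of that citation, and you correctly identify the genuinely delicate point (that maximal elements of $\powerP{D}$ are exactly the compact subsets of $\Max{D}$, with no non-maximal points sneaking in), which is precisely where the general machinery of \cite{Abr85,Abr91a} does its work.
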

Again, this is an instance of a general result, relating solutions of a class of domain equations in $\Dom$ with those in $\Stone$ \cite{Abr85}.

We note that $D$ has ``partial sets''.
\begin{example}
\[ \pstree[levelsep=5ex]{\Tcircle{\bot}}{\Tcircle{\;\;}
\pstree{\Tcircle{\bot}}{\Tcircle{\;\;}}}
\quad \leftrightsquigarrow \quad \{ \bot , \vn , \{ \bot , \vn \} \} 
\]
\end{example}
The ordering on $D$ is the Egli-Milner ordering \cite{Plo}:
\[ S \sqsubseteq T \;\; \equiv \;\;  (\forall x \in S. \, \exists y \in T. \, x \sqsubseteq y) \; \wedge \;
 (\forall y \in T. \, \exists x \in S. \, x \sqsubseteq y) .
 \]
 This can be described in terms of ``partial bisimulation''; see e.g. \cite{Abr91a}.
 
The space of partial sets has been explored further in \cite{OMM,ABH}.
 
 \section{Set Theory in $\Fin$}
We now turn to considering $\Fin$ as a set-theoretic universe, \ie as a structure $(\Fin , {\in}, {=})$. Since we have the isomorphism
 \begin{diagram}
 \Fin && \pile{\rTo^{\mathsf{unfold}} \\ \cong \\ \lTo_{\mathsf{fold}}}&& \powerV{\Fin}
 \end{diagram}
 we can define
 \[ S \in T \;\; \equiv \;\; S \in \mathsf{unfold} (T) . \]
(We shall often elide uses of $\mathsf{fold}$ and $\mathsf{unfold}$.)

How much of ordinary (ZFC) set theory does $\Fin$ satisfy? It certainly cannot be a model of full ZFC, since $\Fin$ satisfies the ultimate anti-foundation axiom:
 \begin{center}
\fbox{$V$ is a set}
\end{center}
where $V = \{ x \mid \mathsf{true}\} = \Fin$. So something has to give, or Russell's paradox would apply.

We begin by discussing the axioms which do straightforwardly apply. Our knowledge of the structure of the domain equation used to construct $\Fin$ makes these properties fall out very simply.
\begin{itemize}
\item Extensionality:
\[ [\forall z. \, z \in x \; \leftrightarrow \; z \in y] \; \rightarrow \; x = y . \]
This follows immediately from the injectivity of $\mathsf{unfold}$.
\item The following closure conditions:
\begin{itemize}
\item $\vn \in \Fin$
\item $x \in \Fin \; \Rightarrow \; \{x\} \in \Fin$
\item $x, y \in \Fin \; \Rightarrow \; x \cup y \in \Fin$
\item $x \in \Fin \; \Rightarrow \; \bigcup x = \{ z \mid \exists y \in x. \, z \in y\} \in \Fin$
\end{itemize}
all hold, since $(\powerV{\Fin}, {\cup}, \vn )$ is a semilattice (the free topological semilattice generated by $\Fin$), and $(\powerv, \{ \cdot \}, {\bigcup})$ is a monad  \cite{Joh}.
\item A suitable axiom of Infinity holds, since e.g.
\[ x = \{\vn\} \cup \{\{y\} \mid y \in x \} \]
has a (unique) solution in $\Fin$.
\item The Powerset axiom holds, in the form:
\[ x \in \Fin \;\; \Rightarrow \;\;  \powerV{x} \in \Fin . \]
Indeed $\mathsf{unfold}(x) \subseteq \Fin$; applying $\powerv$ functorially to the inclusion map we get 
\[ \powerV{\mathsf{unfold}(x)} \hookrightarrow \powerV{\Fin} . \]
The image of a compact space is compact, and hence $\powerv (\mathsf{unfold}(x)) \in \powerv (\powerv (\Fin ))$, and $\mathsf{fold}^{2}(\powerV{\mathsf{unfold}(x)}) \in \Fin$.
\end{itemize}
Clearly, we cannot have full (classical) separation, since otherwise we could derive an inconsistency from $V \in V$.
We \emph{do}, however, have \emph{continuous} versions of Separation, Replacement, and Choice.

\paragraph{Separation}
Separation says that $\{ y \in x \mid \phi (y) \}$ is a set. Consider the chacteristic function of the predicate $\phi$, $f_{\phi} : \Fin \rightarrow \Two$. If $f_{\phi}$ is \emph{continuous} ($\Two$ taken as a discrete space), then 
\[ \{ y \in x \mid f_{\phi (y)} = \mathsf{true} \} \in \Fin , \]
since this is the inverse image of a closed set by a continuous map, hence closed, hence compact, hence in $\Fin$.

\paragraph{Replacement}
If $f : x \rightarrow \Fin$ is continuous, for $x \in \Fin$:
\[ \{ f(y) \mid y \in x \} \in \Fin . \]
Since $x$ is compact, the image $f(x) \subseteq \Fin$ is compact, hence $f(x) \in \powerV{\Fin}$, and $\mathsf{fold}(f(x)) \in \Fin$.

\paragraph{Choice}
If $f :  x \longrightarrow \powerV{\Fin} \setminus \{ \vn\}$ is continuous, for $x \in \Fin$, then for some $y \in \Fin$:
\[ \forall z \in x. \, \exists w \in f(z). \, w \in y \;\; \wedge \;\; \forall w \in y. \, \exists z \in x. \, w \in f(z) . \]
This arises from a standard topological result about selection functions \cite{Mic}: given continuous $f : x \longrightarrow \powerV{\Fin} \setminus \{ \vn\}$, there is a continuous selection function $g : x \rightarrow \Fin$ with $g(z) \in f(z)$, for all $z \in x$. We can then take
\[ y = \{ g(z) \mid z \in x \} . \]

In order to formulate a logical theory for this set-theoretic universe, we need to impose syntactic conditions on formulas to ensure that they give rise to continuous functions. These conditions will, inevitably, involve restrictions on the use of negation. They can then be used to formulate appropriate versions of the Separation, Replacement and Choice axioms.

For more extensive investigations in this area, see e.g. \cite{FH1}.

\end{document}